\newcommand{\bisim}{\leftrightarroweq}
\newcommand{\bbisim}{\leftrightarroweq_{b}}
\newcommand{\apart}{\#}
\newcommand{\bapart}{\#_{b}}
\begin{document}

\title{Relating Apartness and \\Branching Bisimulation Games\thanks{%
This research is partially supported by the Royal Society International Exchange grant (IES\textbackslash R3\textbackslash 223092). The third author was also partially supported by EPSRC NIA grant EP/X019373/1.}}
\author{Jurriaan Rot\inst{1}, Sebastian Junges\inst{1} and Harsh Beohar\inst{2}}

\institute{
Institute for Computing and Information Sciences (iCIS), Radboud University, NL\and
Department of Computer Science, University of Sheffield, UK
}

\maketitle

\begin{abstract}
Geuvers and Jacobs (LMCS 2021) formulated the notion of apartness relation on state-based systems modelled as coalgebras. In this context apartness is formally dual to bisimilarity, and gives an explicit proof system for showing that certain states are not bisimilar. In the current paper, we relate apartness to another classical element of the theory of behavioural equivalences: that of turn-based two-player games. Studying both strong and branching bisimilarity, we show that winning configurations for the Spoiler player correspond to apartness proofs, for transition systems that are image-finite (in the case of strong bisimilarity) and finite (in the case of branching bisimilarity).
\end{abstract}

\section{Introduction}
Bisimilarity is one of the fundamental notions of equivalence~\cite{GlabbeekSpectrumII}, encoding when two states of a labelled transition system (LTS) have the same behaviour.
Bisimilarity is well studied in the literature from both logical and game-theoretic viewpoints.
For instance, the classical Hennessy-Milner characterisation theorem~\cite{hm:hm-logic} states that two states of an image-finite LTS are bisimilar if and only if they satisfy the same set of modal formulas. Similarly, the well-known result by Stirling~\cite{Stirling92:TReport:bisimGames}  states that two states of an LTS are bisimilar if and only if Duplicator has a winning strategy from this pair of states in the Spoiler/Duplicator bisimulation game. These two viewpoints have almost become a standard in the sense that it is expected that similar characterisation results hold, whenever a new notion of behavioural equivalence is proposed.

Orthogonally to these logical and game-theoretic viewpoints, in the recent work of Geuvers and Jacobs~\cite{GeuversJacobs21-apartness}, a dual approach to bisimilarity is postulated in terms of \emph{apartness} in transition systems.
Instead of describing when two states are behaviourally equivalent, as in bisimilarity,
the motive of an apartness relation is in showing \emph{differences} in behaviour. More formally, where bisimilarity is a coinductive characterisation of behavioural equivalence, apartness inductively provide a proof system for constructing witnesses of such differences. Geuvers and Jacobs propose a general coalgebraic formulation of apartness, and show how this yields concrete proof systems for deterministic automata, labelled transition systems and streams. They also develop versions of apartness for weak and branching bisimilarity.

This research strand allows us to study connections between modal logic, games and bisimilarity through the lens of apartness. In particular, the Hennessy-Milner theorem says that two states are apart if and only if there is a distinguishing formula, i.e., a formula that holds in one state but not the other. For games, a natural formulation is that two states are apart if and only if Spoiler has a winning strategy. Both results hold by simply observing that bisimilarity is the complement of apartness~\cite{GeuversJacobs21-apartness}. However, such an approach is rather implicit: it does not really show how to move between apartness proofs, distinguishing formulas and winning strategies for Spoiler.

The relation between apartness proofs and distinguishing formulas is studied in~\cite{Geuvers2022}, and revisited in an abstract coalgebraic setting in~\cite{TurkenburgBKR23}.
In the current paper, we focus on the relation between apartness and bisimulation games. One of the main messages of this paper is the following dichotomy: bisimulations correspond to the winning strategies for Duplicator, while apartness relations correspond to winning strategies for Spoiler.
We explicitly relate winning configurations for Spoiler to apartness proofs.
We first develop the correspondence between apartness relations and Spoiler strategies for strong bisimilarity, and then move on to branching bisimilarity~\cite{BBisim:96}, following the game characterisation in~\cite{YinFHHT14}. Our proofs rely on the assumption that Duplicator has only finitely many possible moves; this is true under the assumption that the LTS is image-finite, in the case of strong bisimilarity. For branching bisimilarity, since Duplicator can answer with a sequence of $\tau$ moves, we make a stronger assumption for our proof strategy to work: that the LTS is finite.

\section{Strong bisimilarity}\label{sec:strongbisim}

The objective of this section is to show that the winning strategy of Spoiler in a strong bisimulation game corresponds to an apartness relation. To this end, we first recall preliminaries on bisimulations~\cite{Park81-bisim}, games \cite{Stirling92:TReport:bisimGames,Stirling99-BisimGames}, and (proper) apartness relations~\cite{GeuversJacobs21-apartness,Geuvers2022}.

Throughout this section, we fix an LTS, which we take to be a tuple $(X, A, \rightarrow)$ consisting of a set $X$ of states, a set $A$ of actions, and a transition relation ${\rightarrow} \subseteq X \times A \times X$. We assume our LTS to be \emph{image-finite}: for every state $x \in X$ and label $a \in A$, the
set $\{x' \mid x \xrightarrow{a} x'\}$ is finite.

\begin{definition}
  A symmetric relation $R\subseteq X \times X$ is a \emph{(strong) bisimulation} if for all $(x,y) \in R$:
  \begin{itemize}
	  \item
  if $x \xrightarrow{a} x'$ then $\exists y'.\, y \xrightarrow{a} y' \wedge x' \mathrel R y'$.
  \end{itemize}
  Two states $x,y\in X$ are \emph{bisimilar}, denoted $x \bisim y$, iff there exists a bisimulation $R$ such that $x \mathrel R y$; i.e., $\bisim\ = \bigcup\{R \mid R \text{ is a bisimulation}\}$.
\end{definition}
The relation $\bisim$ is itself a bisimulation, and it is trivially the largest one.

\begin{definition}\label{def:apart}
	A symmetric relation $R \subseteq X \times X$ is called an \emph{apartness relation} if it satisfies:
  \[
  \frac{x \xrightarrow a x' \quad \forall y'.\, y \xrightarrow a y' \text{ implies }  x' \mathrel R y'}{x \mathrel R y}
  \]
  Two states $x,y\in X$ are \emph{apart}, denoted $x \apart y$, iff $(x,y)$ are related in every apartness relation $R$, i.e., $\apart = \bigcap \{R \mid R \text{ is an apartness relation}\}$. We refer to the relation $\apart$ simply as \emph{apartness}.
\end{definition}
Apartness $\apart$ is an apartness relation itself.
Since apartness is characterised as the \emph{least} relation satisfying the above rule, it provides a proof technique: showing that two states are apart amounts to giving a proof using the rule. Moreover, it suffices to consider \emph{finite} proofs, by the assumption that the LTS is image-finite.
\begin{example}\label{ex:apart}
	Consider the following LTS.
	$$
  \begin{tikzcd}
    & \arrow[ld,"a"'] x_0 \arrow[rd,"a"] & \\
    x_1\arrow[d,"b"]  & & x_2\arrow[d,"c"]\\
    x_3 & & x_4
  \end{tikzcd}
  \begin{tikzcd}
    & \arrow[d,"a"'] y_0 & \\
    & \arrow[ld,"b"'] y_1 \arrow[rd,"c"]  & \\
    y_2 & & y_3
  \end{tikzcd}
  $$
  The states $x_0$ and $y_0$ are apart, which we can show with following proof tree, where we also explicitly use that the apartness is a symmetric relation:
  \[
  \infer{x_0 \mathrel \apart y_0}{x_0 \xrightarrow {a} x_1 &
  \infer{ \forall y'. y_0 \xrightarrow{a} y' \text{ implies }
  x_1 \mathrel \apart y'}
    {\infer{x_1 \mathrel \apart y_1}{\infer{y_1 \mathrel \apart x_1}
        {
        y_1 \xrightarrow{c} y_3 \quad \neg(x_1 \xrightarrow{c} )
        }
    }
  }}
  \]
  In the above proof, $\neg(x_1 \xrightarrow{c})$ means there is no transition of the form $x_1 \xrightarrow{c} x'$, that is, no outgoing $c$-transition from $x_1$. This means the universally quantified statement in the rule for apartness vacuously holds.
 \end{example}

\begin{theorem}[Geuvers and Jacobs~\cite{GeuversJacobs21-apartness}]\label{thm:complement-strong-bis}
Apartness is dual to bisimilarity, i.e., $\apart = (X \times X) \setminus \! \bisim$.
\end{theorem}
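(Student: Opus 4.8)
The plan is to prove the two inclusions separately, in each case reducing to the extremal characterisations already recorded in the excerpt: $\bisim$ is the \emph{largest} bisimulation and $\apart$ is the \emph{smallest} apartness relation (both are, moreover, symmetric). So to show that a relation contains $\apart$ it is enough to check it is an apartness relation, and to show a relation is contained in $\bisim$ it is enough to check it is a bisimulation. The two directions of the theorem are precisely the complements of these two facts.

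For $\apart \subseteq (X\times X)\setminus\bisim$ I would show that $(X\times X)\setminus\bisim$ is itself an apartness relation; minimality of $\apart$ then gives the inclusion. Symmetry of $(X\times X)\setminus\bisim$ is immediate from symmetry of $\bisim$. For the rule, suppose $x\xrightarrow{a}x'$ and that every $a$-successor $y'$ of $y$ satisfies $x'\not\bisim y'$; I must derive $x\not\bisim y$. If instead $x\bisim y$, then since $\bisim$ is a bisimulation and $x\xrightarrow{a}x'$, there is some $y\xrightarrow{a}y'$ with $x'\bisim y'$, contradicting the hypothesis. Hence $(X\times X)\setminus\bisim$ satisfies the apartness rule and therefore contains $\apart$.

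For the converse $(X\times X)\setminus\bisim\subseteq\apart$ I would argue contrapositively, showing $(X\times X)\setminus\apart$ is a bisimulation, so that it is contained in the largest one, $\bisim$. Again symmetry is inherited, this time from $\apart$. For the transfer condition, take $(x,y)$ with $x\not\apart y$ and a transition $x\xrightarrow{a}x'$; I need an $a$-successor $y'$ of $y$ with $x'\not\apart y'$. If there were none, i.e.\ every $a$-successor $y'$ of $y$ satisfied $x'\apart y'$ (a condition that is vacuously true when $y$ has no $a$-successor), then applying the apartness rule to the apartness relation $\apart$ with premise $x\xrightarrow{a}x'$ would yield $x\apart y$, a contradiction. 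So the required $y'$ exists and $(X\times X)\setminus\apart$ is a bisimulation.

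There is no serious obstacle here: the argument is an unfolding of the lattice-theoretic fact that the least fixed point of the De Morgan dual $R\mapsto (X\times X)\setminus\Phi\big((X\times X)\setminus R\big)$ of the (symmetrised) bisimulation operator $\Phi$ equals $(X\times X)\setminus\nu\Phi = (X\times X)\setminus\bisim$. The only points that need a little care are the vacuous instance of the universal quantifier in the apartness rule (the case where $y$ has no $a$-successor), which is exactly where an ``$x$ can move but $y$ cannot'' situation is detected, and keeping the symmetry bookkeeping consistent when checking the two closure conditions. Note also that this particular statement needs no finiteness hypothesis on the LTS; image-finiteness only enters later, to ensure that apartness proofs can be taken finite and for the game correspondence.
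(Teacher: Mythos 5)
Your proof is correct and is essentially the standard argument: the paper itself states this result without proof, citing Geuvers and Jacobs, whose proof is exactly this complementation argument (checking that $(X \times X) \setminus \bisim$ is an apartness relation and that $(X \times X) \setminus \apart$ is a bisimulation, then invoking minimality of $\apart$ and maximality of $\bisim$). Your closing observation that no image-finiteness is needed for this particular statement is also accurate, since $\apart$ is defined as the least relation closed under the rule over the full lattice rather than as an $\omega$-chain approximant.
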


We next recall the strong bisimulation game of Stirling~\cite{Stirling92:TReport:bisimGames}, using notation from~\cite{Frutos-EscrigKW17}. The intuition is Duplicator wants to show that two states are bisimilar, while Spoiler wants to show the difference in their behaviour (i.e., they are apart; see \cref{def:apart}).
The game is a \emph{turn-based two-player game on a graph} $(C = C_D \uplus C_S,\rightarrow_D,\rightarrow_S)$ where the players are called Duplicator (or just $D$) and Spoiler (or just $S$). We refer to the nodes $C$ as \emph{configurations}, to Duplicator moves by ${\rightarrow_D} \subseteq C_D \times C_S$, and Spoiler moves by ${\rightarrow_S} \subseteq C_S \times C_D$.

\begin{definition}[Strong bisimulation game]\label{def:bisim-game}
   The set of configurations is given by $C = C_S \uplus C_D$, where $C_S = X \times X$ are Spoiler configurations, ranged over by tuples denoted by $[x,y]$, and $C_D = X \times A \times X \times X$ are Duplicator configurations, ranged over by tuples of the form $\langle x,a,x',y \rangle$. The moves are:
   \begin{itemize}
     \item Spoiler can move from a configuration $[x,y]$ as follows:
	 \begin{enumerate}
	 	\item to $\langle x,a,x',y \rangle$ if there is a transition of the form  $x \xrightarrow a x'$;
		\item to $\langle y,a,y',x \rangle$ if there is a transition of the form  $y \xrightarrow a y'$;
	\end{enumerate}
     \item Duplicator can move from a configuration $\langle x,a,x',y \rangle$ to $[x',y']$ if there exists a transition of the form $y \xrightarrow{a} y'$.
   \end{itemize}
   \end{definition}
   Spoiler wins a play if and only if Duplicator cannot move. To formalise this, we need a few definitions regarding two-player games.
  \begin{definition}We define plays and winning configurations as follows.
	\begin{itemize}
		\item
   A \emph{play} from a Spoiler configuration $[x,y]$ is a finite or infinite sequence $\sigma \in C^* \cup C^\omega$ of configurations such that $\sigma_0 = [x,y]$ and for all $i>0$, $\sigma_{i+1}$ is a move from $\sigma_i$.
  \item
  A play is \emph{maximal} if it is either infinite, or there is no move possible from the last configuration. A finite maximal play is winning for Spoiler if this last configuration is in $C_D$ (that is, Duplicator is stuck); all other maximal plays are won by Duplicator.
  \item A (positional) \emph{strategy} for player $P \in \{D,S\}$ is a map $\pi_P$ from configurations $C_P$ to moves for player $P$ in that configuration. A play $\sigma$ is \emph{consistent} with a strategy $\pi_P$ for player $P$ if for all $i$ such that $\sigma_i \in C_P$, we have that $\pi_P(\sigma_i) = \sigma_{i+1}$.
  \item A strategy $\pi_P$ for player $P$ is called \emph{winning from a configuration} $[x,y] \in C_S$ if every maximal play that is consistent with $\pi_P$ is winning for $P$. A Spoiler configuration $[x,y] \in C_S$ is winning for player $P$ if there exists a winning strategy from $[x,y]$ for that player.
  The set of winning configurations of player $P$ is called the \emph{winning region} of $P$, and is denoted by $\mathcal{W}_P$.
  \end{itemize}
\end{definition}


Bisimulation games are so-called reachability games for Spoiler, i.e., Spoiler wants to reach a particular set of configurations. Consequentially, they are  so-called safety games for Duplicator, who wants to ensure that we do not reach those configurations.
To make this explicit, we provide fixed-point characterisations of the winning regions of both players. Consider the usual definition of box ($\Box_D,\Box_S$) and diamond ($\lozenge_D,\lozenge_S$) modalities:
\begin{align*}
  \lozenge_D W =&\ \{c\in C_D \mid \exists c'.\ c \rightarrow_D c' \land c'\in W \} \\
  \Box_D W =&\ \{c\in C_D \mid \forall c'.\ c\rightarrow_D c' \text{ implies } c'\in W\},
\end{align*}
where $W\subseteq C$. The modalities $\lozenge_S$, $\Box_S$ are defined analogously.

\begin{proposition}\label{prop:WinRegions}
  Let $(C,\rightarrow_D,\rightarrow_S)$ be an alternating two-player game.
  \begin{itemize}
    \item The winning region $\mathcal{W}_D$ of Duplicator is the largest set $W$ such that
        \[ W \subseteq \Box_S \lozenge_D W\,.\]
    \item The winning region $\mathcal{W}_S$ of Spoiler is the least set $W$ such that
        \[ \lozenge_S \Box_D W \subseteq W\,.\]
  \end{itemize}	
\end{proposition}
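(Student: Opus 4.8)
The statement is the standard solution of reachability games (for Spoiler) and of the dual safety games (for Duplicator), so the plan is to prove the two characterisations directly, each by two inclusions, using only the Knaster--Tarski theorem. Write $\Phi_S(W) = \lozenge_S \Box_D W$ and $\Psi_D(W) = \Box_S \lozenge_D W$; both are monotone operators on $\mathcal{P}(C)$, and since $\lozenge_S,\Box_S$ always land in $C_S$, both $\mu\Phi_S$ and $\nu\Psi_D$ are in fact subsets of $C_S$, and both are fixed points of their operators. No finiteness assumption on the game is needed: Knaster--Tarski gives $\mu\Phi_S = \bigcup_\alpha \Phi_S^\alpha(\emptyset)$ over the ordinals, and $\nu\Psi_D$ as the greatest post-fixed point.

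For $\mu\Phi_S \subseteq \mathcal{W}_S$ I would use the attractor strategy: for $c \in \mu\Phi_S$ let $\rho(c)$ be the least $\beta$ with $c \in \Phi_S^{\beta+1}(\emptyset)$; then $c \in \lozenge_S\Box_D(\Phi_S^{\rho(c)}(\emptyset))$, so Spoiler can move to a configuration $c' \in C_D$ all of whose Duplicator successors lie in $\Phi_S^{\rho(c)}(\emptyset)$, and since every configuration in $\Phi_S^{\rho(c)}(\emptyset)$ has rank strictly below $\rho(c)$, these successors all have smaller rank. Fixing such a move as Spoiler's positional strategy, along any consistent play either Duplicator gets stuck --- a Spoiler win --- or the ranks of the visited Spoiler configurations strictly decrease, which by well-foundedness of the ordinals cannot continue forever; so the play is finite and ends with Duplicator stuck. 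For the converse $\mathcal{W}_S \subseteq \mu\Phi_S$ I would argue contrapositively on $N = C \setminus \mu\Phi_S$: since $\mu\Phi_S = \lozenge_S\Box_D\mu\Phi_S$, a Spoiler configuration $c \in N$ is not in $\lozenge_S\Box_D\mu\Phi_S$, which unfolds to: every Spoiler move $c \rightarrow_S c'$ admits a Duplicator move $c' \rightarrow_D c''$ with $c'' \in N \cap C_S$. Hence Duplicator has a positional strategy under which every play starting in $N$ stays in $N$ and Duplicator is never stuck; each resulting maximal play is infinite or ends in a Spoiler configuration, hence won by Duplicator, so no configuration of $N$ is in $\mathcal{W}_S$.

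The Duplicator case is dual. For $\nu\Psi_D \subseteq \mathcal{W}_D$: from $\nu\Psi_D = \Box_S\lozenge_D\nu\Psi_D$, every Spoiler move out of a configuration in $\nu\Psi_D$ lands in a Duplicator configuration from which some move leads back into $\nu\Psi_D$; the corresponding positional Duplicator strategy keeps every consistent play inside $\nu\Psi_D$ at Spoiler configurations and never leaves Duplicator stuck, so Duplicator wins. For $\mathcal{W}_D \subseteq \nu\Psi_D$: fix $c \in \mathcal{W}_D$ with winning strategy $\pi_D$ and let $V \subseteq C_S$ collect the Spoiler configurations reachable from $c$ along $\pi_D$-consistent plays. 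If $d \in V$ and $d \rightarrow_S d'$, the play reaching $d'$ is still $\pi_D$-consistent, so Duplicator is not stuck at $d'$ (else that play would be a maximal Spoiler win, contradicting that $\pi_D$ wins from $c$); thus $d' \rightarrow_D \pi_D(d') \in V$. Hence $V \subseteq \Box_S\lozenge_D V$, so $V \subseteq \nu\Psi_D$ by coinduction, and $c \in V$.

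The main obstacle is bookkeeping rather than anything conceptual: one must keep the asymmetric winning condition straight (a stuck Duplicator configuration is a Spoiler win, while a stuck Spoiler configuration and every infinite play are Duplicator wins), make sure the constructed positional strategies are total by extending them arbitrarily off the invariant set they are built on, and use the one elementary fact that a play consistent with a winning strategy never reaches a configuration where that player is stuck (used in the $\mathcal{W}_D \subseteq \nu\Psi_D$ direction). As a by-product one obtains that $\mathcal{W}_S$ and $\mathcal{W}_D$ partition $C_S$, i.e.\ these games are determined, but this is not needed here.
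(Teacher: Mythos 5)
Your proof is correct, but it takes a genuinely different route from the one in the paper, and it is worth spelling out the difference. The paper proves the Spoiler half in two steps tailored to the finiteness assumptions: first it shows $\lozenge_S\Box_D\mathcal{W}_S \subseteq \mathcal{W}_S$ by gluing the winning strategies of the successor configurations into one positional strategy (with some care about overlapping reachable sets), and then it shows $\mathcal{W}_S \subseteq W$ for an \emph{arbitrary} pre-fixed point $W$ by bounding the length of maximal plays consistent with a winning strategy --- via K\"onig's lemma and image-finiteness --- and inducting on that bound; the Duplicator half is delegated to the literature. You instead identify $\mathcal{W}_S$ with the least fixed point $\mu\Phi_S$ computed by transfinite iteration: the inclusion $\mu\Phi_S \subseteq \mathcal{W}_S$ comes from the attractor strategy with strictly decreasing ordinal ranks, and the converse from a Duplicator ``trap'' strategy that keeps play inside the complement of $\mu\Phi_S$; you also prove the Duplicator half directly. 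What your approach buys is generality and a cleaner strategy construction: it needs no image-finiteness, no K\"onig's lemma, and no bookkeeping to merge several positional strategies (the rank function does that uniformly), at the cost of invoking ordinals where the paper stays with an induction on natural numbers. Two small points to make explicit: in the step from ``Duplicator wins from every $c \in N$'' to ``$N \cap \mathcal{W}_S = \emptyset$'' you are silently using that both players cannot win from the same configuration (a maximal play consistent with both winning strategies would be won by both, which is absurd) --- alternatively, argue directly that any Spoiler strategy admits a consistent play that never leaves $N$; and you should note, as you do in passing, that Spoiler is never stuck at a configuration of $\mu\Phi_S$, since membership in $\lozenge_S\Box_D(\cdot)$ already provides a move.
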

\begin{proof}
		The fixed-point characterisation for winning region of Duplicator is studied in, e.g.,~\cite{KomoridaKHKH19-codensitygames}.
	  We consider the Spoiler case.
	  We first show that $\mathcal{W}_S$ indeed satisfies $\lozenge_S \Box_D \mathcal{W}_S \subseteq \mathcal{W}_S$.
	  To this end we introduce some notation: for a strategy $\pi$ and configuration $c$, we denote by $\pi[c]$ the
	  set of \emph{reachable} configurations, i.e., the set of all $c' \in C_S$ such that $c'$ occurs in a play from $c$ consistent with $\pi$. Note that
	  if $\pi$ is a winning strategy from $c$, then it is winning from all $c' \in \pi[c]$.
	
	  Now suppose $c_0 \in \lozenge_S \Box_D \mathcal{W}_S$, so there exists $c_0 \rightarrow_S c'$ such that for all configurations $c''$ with $c' \rightarrow_D c''$, $c''$ is winning for Spoiler, i.e., $c'' \in \mathcal{W}_S$. We refer to these resulting Spoiler winning configurations as $c_1, \ldots, c_k$ (note that this is a finite set, since the LTS is image-finite; although this is not strictly needed for this side of the argument), and associate winning strategies $\pi_1, \ldots, \pi_k$ for each of these configurations. 
	  We construct a strategy $\pi$ as follows:\footnote{The challenge in combining the strategies $\pi_1, \ldots, \pi_k$ is that there may be overlap between the reachable sets $\pi_1[c_1], \ldots, \pi_k[c_k]$. This issue arises specifically because we only consider \emph{positional} strategies; if the first move (i.e., a choice of configuration $c_i$) is recorded, one can stick to the corresponding strategy $\pi_i$.}
	  $$
	  \pi(c) =
	  \begin{cases}
	  	\pi_{i}(c) & \text{if } c \in \pi_i[c_i] \text{ and for all }j< i. \, c \not \in \pi_j[c_j]  \\
		c' & \text{if } c=c_0 \text{ and }\forall j. \, c \not \in \pi_j[c_j] \\
		\text{any }c_* \in C_D & \text{otherwise}
	  \end{cases}
	  $$
	  This is a winning strategy from $c_0$: after one move from Spoiler, every move from Duplicator leads us to a configuration for which the strategy is winning.
	
	  Second, suppose $W$ is an arbitrary set such that $\lozenge_S \Box_D W \subseteq W$.
	  We have to show that $\mathcal{W}_S \subseteq W$.
	  This is where we crucially rely on the assumption that our LTS is image-finite.
	  Let $c \in \mathcal{W}_S$ and let $\pi$ be a winning strategy from $c$ (see \cref{def:bisim-game}). Consider the tree of all possible plays from $c$ consistent with $\pi$; the edges in this tree are given by Duplicator moves. Since the LTS is image-finite, Duplicator has finitely many choices at each point and therefore this tree is finitely branching. The crux is that it is also finite depth; for suppose it is not, then by K\"onig's lemma there would be an infinite play, which is winning for Duplicator and contradicts that $\pi$ is winning for Spoiler from $c$. Thus, the length of maximal plays from $c$ consistent with $\pi$ is bounded by some $n \in \mathbb{N}$.\footnote{Here the length of a play is the number of Spoiler configurations that appear in it.}
	  
	  Now, for each $n$, let $\mathcal{W}_S^n$ be the set of configurations $c \in \mathcal{W}_S$ for which there is a winning strategy $\pi$ from $c$
	  such that the length of the longest maximal play consistent with $\pi$ is at most $n$. We prove $\mathcal{W}_S^n \subseteq W$ by induction on $n$. By the above argument, $\mathcal{W}_S = \bigcup_{n \in \mathbb{N}} \mathcal{W}_S^n$, so then we are done.
	
    \begin{enumerate}
      \item In the base case, the longest maximal play consistent with a winning strategy is of the form $c \rightarrow_S c'$ and Duplicator is stuck. Then clearly $c \in \lozenge_S \Box_D W$. Since $\lozenge_S \Box_D W \subseteq W$ we are done.
	  \item For the inductive step, suppose $c \in \mathcal{W}_S^n$. Let $\pi$ be a witnessing strategy where the longest maximal play consistent with $\pi$ has length at most $n$, and suppose $\mathcal{W}_S^{n-1} \subseteq W$.
	  Take any $c'$ such that $\pi(c) \rightarrow_D c'$. Then $\pi$ is winning in $c'$ and every maximal play consistent with it has length at most $n-1$, so $c' \in \mathcal{W}_S^{n-1}$. By the induction hypothesis we get $c' \in W$. Since $c \rightarrow_S \pi(c) \rightarrow_D c'$ we get $c \in \lozenge_S \Box_D W$ and since $\lozenge_S \Box_D W \subseteq W$ we get $c\in W$. \qed
    \end{enumerate}  	
	
\end{proof}

\begin{remark}
	The above proof seems non-constructive: we used K\"onig's lemma in a proof by contradiction, to argue that maximal plays consistent with a Spoiler winning strategy are bounded. This boundedness is based on the assumption that the LTS is image-finite. To avoid this use of K\"onig's lemma, one can assume the LTS to be \emph{finite state}; then the length of maximal plays consistent with a winning strategy for Spoiler are bounded by the number of possible configurations.
\end{remark}

The fixed-point characterisation of the winning region of Spoiler is very close to apartness. Indeed, instantiating the fixed-point characterisation for the winning region of Spoiler with the moves of the strong bisimulation game, we recover the definition of apartness. Thus we obtain:

\begin{theorem}\label{thm:apart-spoiler}
  Two states $x,y$ of an LTS are apart (i.e. $x \apart y$) iff Spoiler has a winning strategy from the configuration $[x,y]$ in the strong bisimulation game.
\end{theorem}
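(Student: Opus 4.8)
The plan is to read off the statement directly from the fixed-point characterisation of Spoiler's winning region in \cref{prop:WinRegions}, by checking that this characterisation, once instantiated with the moves of the strong bisimulation game (\cref{def:bisim-game}), becomes exactly the inductive definition of apartness (\cref{def:apart}). Concretely, $\mathcal{W}_S$ is the least $W$ with $\lozenge_S\Box_D W \subseteq W$, while $\apart$ is the least apartness relation; since ``Spoiler has a winning strategy from $[x,y]$'' means precisely $[x,y]\in\mathcal{W}_S$, and since $\mathcal{W}_S \subseteq C_S = X\times X$ (every application of $\lozenge_S\Box_D$ already lands in $C_S$), it suffices to prove the equality $\mathcal{W}_S = \apart$ of relations on $X$. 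I would first note that the bisimulation game is indeed an alternating two-player game, so that \cref{prop:WinRegions} applies and carries the reliance on image-finiteness; no further appeal to image-finiteness is then needed here.

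The core computation is to unfold $\lozenge_S\Box_D W$ for a relation $W\subseteq X\times X$. From the Duplicator moves, $\Box_D W = \{\langle x,a,x',y\rangle \mid \forall y'.\, y\xrightarrow{a}y' \text{ implies } (x',y')\in W\}$. From the two kinds of Spoiler move out of $[x,y]$, one then gets that $[x,y]\in\lozenge_S\Box_D W$ if and only if either \emph{(i)} there are $a,x'$ with $x\xrightarrow{a}x'$ and $\forall y'.\, y\xrightarrow{a}y'$ implies $(x',y')\in W$, or \emph{(ii)} there are $a,y'$ with $y\xrightarrow{a}y'$ and $\forall x'.\, x\xrightarrow{a}x'$ implies $(y',x')\in W$. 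Disjunct \emph{(i)} is exactly the premise of the apartness rule for the pair $(x,y)$, and disjunct \emph{(ii)} is the premise of the apartness rule for $(y,x)$.

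The only gap between ``closed under $\lozenge_S\Box_D$'' and ``apartness relation'' is thus symmetry bookkeeping, and I would close it by observing that the game graph is invariant under swapping the two state components of a Spoiler configuration: from $[x,y]$ and from $[y,x]$ Spoiler has literally the same set of successor configurations, whence $\mathcal{W}_S$ is a symmetric relation. Now for the two inclusions. Since $\lozenge_S\Box_D\mathcal{W}_S\subseteq\mathcal{W}_S$, disjunct \emph{(i)} shows $\mathcal{W}_S$ satisfies the apartness rule; being symmetric, it is an apartness relation, and as $\apart$ is the least such relation, $\apart\subseteq\mathcal{W}_S$. Conversely, $\apart$ is symmetric and is itself an apartness relation, so disjunct \emph{(i)} yields $(x,y)\in\apart$ and disjunct \emph{(ii)} yields $(y,x)\in\apart$, hence $(x,y)\in\apart$ by symmetry; therefore $\lozenge_S\Box_D\apart\subseteq\apart$, and minimality of $\mathcal{W}_S$ in \cref{prop:WinRegions} gives $\mathcal{W}_S\subseteq\apart$. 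Combining the two inclusions yields $\mathcal{W}_S=\apart$, i.e.\ $x\apart y$ iff $[x,y]\in\mathcal{W}_S$.

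The only genuine subtlety I anticipate is this one-sidedness of the apartness rule versus the two-sided Spoiler moves, which is exactly what the symmetry-of-the-game observation resolves; the remaining steps are a routine unfolding of the box/diamond modalities together with the minimality clauses of \cref{prop:WinRegions} and \cref{def:apart}. (One could alternatively derive the result non-constructively by combining \cref{thm:complement-strong-bis} with determinacy of the game, but that route does not exhibit the explicit correspondence this paper is after.)
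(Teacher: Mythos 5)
Your proposal is correct and follows exactly the route the paper takes: the paper proves \cref{thm:apart-spoiler} by simply instantiating the fixed-point characterisation of $\mathcal{W}_S$ from \cref{prop:WinRegions} with the moves of the strong bisimulation game and observing that this recovers \cref{def:apart}. You have merely made explicit the details the paper leaves implicit (the unfolding of $\lozenge_S\Box_D$, the symmetry of $\mathcal{W}_S$, and the two inclusions via the respective minimality properties), and these details check out.
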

The dual version, observed by Stirling, can be similarly obtained from the fixed-point characterisation of the winning region of Duplicator.

\begin{theorem}[\!\!{\cite[Proposition~1]{Stirling92:TReport:bisimGames}}]\label{thm:stirling-strong-bis}
	States $x,y \in X$ are bisimilar iff $[x,y]$ is winning for Duplicator in the strong bisimulation game.
\end{theorem}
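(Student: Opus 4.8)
The plan is to read the result off the fixed-point characterisation of the Spoiler winning region in \cref{prop:WinRegions}. Since winning configurations are by definition Spoiler configurations, $\mathcal{W}_S$ may be regarded as a subset of $C_S = X\times X$, and \cref{prop:WinRegions} then says that it is the least set $W\subseteq X\times X$ with $\lozenge_S\Box_D W\subseteq W$. The first — and essentially only — real step is to compute the operator $\Phi = \lozenge_S\Box_D$ for the strong bisimulation game of \cref{def:bisim-game}. Unfolding the two kinds of Spoiler move and the Duplicator move, one obtains that $(x,y)\in\Phi(W)$ if and only if
\begin{itemize}
  \item there are $a$ and $x'$ with $x\xrightarrow a x'$ such that for all $y'$, $y\xrightarrow a y'$ implies $(x',y')\in W$; or
  \item there are $a$ and $y'$ with $y\xrightarrow a y'$ such that for all $x'$, $x\xrightarrow a x'$ implies $(y',x')\in W$.
\end{itemize}
Writing $\rho$ for the monotone operator sending $W$ to the set described by the first clause — which is exactly the premise of the apartness rule of \cref{def:apart} — the second clause is precisely $\{(x,y)\mid (y,x)\in\rho(W)\}$, so $\Phi(W) = \rho(W)\cup\rho(W)^{-1}$; in particular $\Phi(W)$ is symmetric for every $W$.

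With this identity in hand, I would prove $\mathcal{W}_S = {\apart}$ (as subsets of $X\times X$), which is exactly the claimed equivalence, by the two obvious inclusions, each via a leastness argument. For $\mathcal{W}_S\subseteq{\apart}$: the relation $\apart$ is symmetric and closed under $\rho$ (that is the apartness rule), hence $\Phi(\apart) = \rho(\apart)\cup\rho(\apart)^{-1}\subseteq{\apart}$, so $\apart$ is a pre-fixed point of $\Phi$ and therefore contains the least one, namely $\mathcal{W}_S$. For ${\apart}\subseteq\mathcal{W}_S$: by minimality of $\apart$ among symmetric $\rho$-closed relations it suffices to check that $\mathcal{W}_S$ is symmetric and $\rho$-closed. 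Symmetry is immediate because $\mathcal{W}_S = \Phi(\mathcal{W}_S)$ and every $\Phi$-image is symmetric; and $\rho(\mathcal{W}_S)\subseteq\rho(\mathcal{W}_S)\cup\rho(\mathcal{W}_S)^{-1} = \Phi(\mathcal{W}_S)\subseteq\mathcal{W}_S$.

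The proof is short, and the only point that needs care is the bookkeeping around symmetry: the apartness rule is not itself symmetric, whereas the game operator $\Phi$ automatically symmetrises — the two Spoiler moves from $[x,y]$, via a transition of $x$ or via a transition of $y$, give rise to the two mirror clauses — so one must verify that the least symmetric $\rho$-closed relation really does coincide with the least fixed point of $\Phi = \rho(-)\cup\rho(-)^{-1}$, which is what the two leastness arguments above establish. An alternative, more hands-on route that avoids \cref{prop:WinRegions} would be to translate a finite apartness derivation of $x\apart y$ into a winning Spoiler strategy from $[x,y]$ by induction on the derivation (the last rule instance prescribing Spoiler's first move), and conversely to turn the finite tree of plays consistent with a Spoiler winning strategy — finite by the König's-lemma argument used in the proof of \cref{prop:WinRegions} — into an apartness proof; but going through the fixed-point characterisation makes repeating that analysis unnecessary.
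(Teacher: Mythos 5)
Your argument is a clean and essentially correct proof --- but of the wrong statement. What you establish is $\mathcal{W}_S = {\apart}$, i.e.\ \cref{thm:apart-spoiler} (apartness coincides with the Spoiler winning region). The theorem under review is the dual one: $x \bisim y$ iff $[x,y] \in \mathcal{W}_D$. The sentence ``which is exactly the claimed equivalence'' is where the gap sits: to pass from $\mathcal{W}_S = {\apart} = (X\times X)\setminus{\bisim}$ to ${\bisim} = \mathcal{W}_D$ you would need determinacy, $\mathcal{W}_D = (X\times X)\setminus\mathcal{W}_S$. That is not free --- the non-trivial inclusion is that every configuration not winning for Spoiler is winning for Duplicator --- and in the paper determinacy is a \emph{corollary} of this theorem together with \cref{thm:apart-spoiler} and \cref{thm:complement-strong-bis}, so invoking it here would be circular relative to the paper's development.

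The repair is to run your computation on the other half of \cref{prop:WinRegions}, which is what the paper does: $\mathcal{W}_D$ is the \emph{greatest} $W$ with $W \subseteq \Box_S\lozenge_D W$. Unfolding $\Box_S\lozenge_D$ for the game of \cref{def:bisim-game} gives that $(x,y)\in\Box_S\lozenge_D W$ iff for every $x \xrightarrow{a} x'$ there is $y \xrightarrow{a} y'$ with $(x',y')\in W$, and symmetrically for every $y \xrightarrow{a} y'$ there is $x \xrightarrow{a} x'$ with $(y',x')\in W$ --- i.e.\ the post-fixed points of $\Box_S\lozenge_D$ are exactly the relations $W$ such that $W\cup W^{-1}$ is a bisimulation (and post-fixed points are closed under union with their converse, so the greatest one is symmetric). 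Hence the greatest post-fixed point is $\bisim$, and $\mathcal{W}_D = {\bisim}$. The same bookkeeping about symmetry that you carried out for $\rho$ and $\Phi$ applies verbatim on this side; only the direction of the leastness/greatestness arguments flips.
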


We recover the following from \cref{thm:stirling-strong-bis},~\cref{thm:apart-spoiler} and~\cref{thm:complement-strong-bis}.
\begin{corollary}[\!\!\cite{Stirling99-BisimGames}]
	The bisimulation game is determined, i.e., every configuration is winning for exactly one of the two players.
\end{corollary}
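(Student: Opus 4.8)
The plan is to obtain the statement as a direct corollary of the three results recalled just above it, treating them as black boxes. Fix a Spoiler configuration $[x,y]\in C_S$ (these are exactly the configurations for which the notion of being ``winning for player $P$'' has been defined, so the claim is naturally read over $C_S$). It then suffices to establish two inclusions: $C_S \subseteq \mathcal{W}_D \cup \mathcal{W}_S$, i.e.\ every configuration is winning for \emph{at least} one player, and $\mathcal{W}_D \cap \mathcal{W}_S = \emptyset$, i.e.\ no configuration is winning for \emph{both}. Conjoining these two facts is precisely ``winning for exactly one of the two players''.

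For the first inclusion, \cref{thm:complement-strong-bis} gives $\apart = (X\times X)\setminus{\bisim}$, so for each pair $(x,y)$ either $x\bisim y$ or $x\apart y$. If $x\bisim y$ then $[x,y]\in\mathcal{W}_D$ by \cref{thm:stirling-strong-bis}; if $x\apart y$ then $[x,y]\in\mathcal{W}_S$ by \cref{thm:apart-spoiler}. For the second inclusion, \cref{thm:complement-strong-bis} also tells us that $\bisim$ and $\apart$ are disjoint, hence $x\bisim y$ and $x\apart y$ cannot hold simultaneously; since (again by \cref{thm:stirling-strong-bis,thm:apart-spoiler}) membership of $[x,y]$ in $\mathcal{W}_D$ is equivalent to $x\bisim y$ and membership in $\mathcal{W}_S$ is equivalent to $x\apart y$, the sets $\mathcal{W}_D$ and $\mathcal{W}_S$ must be disjoint. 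Together these two inclusions show that $\{\mathcal{W}_D,\mathcal{W}_S\}$ partitions $C_S$, which is the claim.

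I do not expect a genuine obstacle here, since all the actual work has been carried out in the preceding theorems; the one point worth making explicit is that determinacy of a turn-based game is not automatic for infinite game graphs, and in this development it comes essentially for free only because \cref{thm:apart-spoiler} (equivalently, the Spoiler half of \cref{prop:WinRegions}) was established under the standing assumption that the LTS is image-finite. As an aside, the disjointness half $\mathcal{W}_D\cap\mathcal{W}_S=\emptyset$ also admits a self-contained argument that avoids image-finiteness entirely: any maximal play that is consistent with winning strategies for both players would have to be won by both, which is impossible. Routing everything through \cref{thm:complement-strong-bis} is shorter, though, and it keeps the apartness-versus-bisimilarity reading of the corollary transparent.
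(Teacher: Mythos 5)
Your proof is correct and follows exactly the route the paper intends: the paper derives the corollary directly from \cref{thm:stirling-strong-bis}, \cref{thm:apart-spoiler} and \cref{thm:complement-strong-bis}, which is precisely your argument that $\apart$ being the complement of $\bisim$ makes $\mathcal{W}_S$ and $\mathcal{W}_D$ a partition of $C_S$. Your added remarks on image-finiteness and the self-contained disjointness argument are accurate but not needed.
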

%
%

\section{Branching bisimilarity}

In this section, we study LTSs with silent (internal) actions and consider branching bisimilarity~\cite{BBisim:96} as the notion of behavioural equivalence. We recall the branching bisimulation game of~\cite{YinFHHT14} and connect Spoiler winning positions to the notion of branching apartness~\cite{GeuversJacobs21-apartness}.

Throughout this section, we again fix an LTS $(X,A,\rightarrow)$ and assume that the set of labels contains $A$ contains a distinguished silent action $\tau \in A$. We use $\alpha, \beta$
to range over $A$, and $a,b$ for labels in $A \setminus \{\tau\}$.
We write $x \Longrightarrow x'$ if there is a sequence of $\tau$ steps from $x$ to $x'$.
We use $x \xrightarrow{(\alpha)} x'$ to denote that either (1) $x \xrightarrow{\alpha} x'$ or (2) both $\alpha = \tau$ and $x=x'$.
We assume that $X$ is \emph{finite}; as a consequence, $\Longrightarrow$ is finitely branching, that is, for each $x \in X$, the set $\{x' \mid x \Longrightarrow x'\}$ is finite.

\begin{definition}
  A symmetric relation $R\subseteq X \times X$ is a \emph{branching bisimulation} if for all $(x,y) \in R$:
  \begin{itemize}
	  \item
  if $x \xrightarrow{\alpha} x'$ then $\exists y',y''.\, y \Longrightarrow y' \xrightarrow{(\alpha)} y'' \wedge x R y' \wedge x' R y''$.
 \end{itemize}
  Two states $x,y\in X$ are \emph{branching bisimilar}, denoted $x \bbisim y$, iff there exists a bisimulation $R$ such that $x \mathrel R y$.
\end{definition}
There is, accordingly, a natural notion of apartness~\cite{GeuversJacobs21-apartness}.
\begin{definition}\label{def:branching-apart}
    A \emph{branching apartness relation} $R \subseteq X\times X$ is a symmetric relation satisfying the following rules.
    \[
    \frac{x \xrightarrow{\alpha} x' \quad \forall y',y''. \, y \Longrightarrow y' \xrightarrow{(\alpha)} y'' \text{ implies }  (x \mathrel R y' \vee x' \mathrel R y'')}{x \mathrel R y}
    \]
	As usual we say $x,y$ are \emph{branching apart}, denoted $x \mathrel{\bapart} y$, iff they are related by every branching apartness relation $R$.
\end{definition}

\begin{example}
  Consider the LTS with silent action as given below.

$$
  \begin{tikzcd}
    & \arrow[ld,"a"'] x_0 \arrow[rd,"\tau"] & \\
    x_1& & x_2\arrow[d,"b"]\\
    & & x_3
  \end{tikzcd}
  \quad
  \begin{tikzcd}
    & \arrow[ld,"a"'] y_0 \arrow[rd,"b"] & \\
    y_1& & y_2\\
  \end{tikzcd}
$$

  The states $x_0$ and $y_0$ are branching apart, as can be shown with the following proof tree.
  \[
  \infer{x_0 \mathrel{\bapart} y_0}{x_0 \xrightarrow\tau x_2 & \forall y,y'. y_0 \text{ implies } y \xrightarrow{(\tau)} y' \ \text{implies}\ \infer{(x_0 \mathrel{\bapart} y \lor x_2 \mathrel{\bapart} y')}
  {(x_0 \mathrel{\bapart} y_0 \lor
  \infer{x_2 \mathrel{\bapart} y_0)}
  {\infer{y_0 \mathrel{\bapart} x_2}{y_0 \xrightarrow a y_1}}
  }
  }
  \]
\end{example}

The following game for branching bisimilarity comes from~\cite{YinFHHT14}, which is another turn-based two-player game. However, in the branching bisimulation game, Spoiler has two types of Spoiler moves or transition relations from its configurations. Contrary to the (strong) bisimulation game, the following game is no longer alternating, although it is ``almost'': after the first move from Spoiler, single moves of Duplicator are alternated with two consecutive moves from Spoiler. \begin{definition}[Branching bisimulation game]
   The set of configurations is given by $C = C_S \uplus C_D$, where $C_S = X^2 \cup X^5$ are Spoiler configurations, ranged over by tuples denoted by $[x,y]$ and $[x,x',y,y',y'']$ respectively; and $C_D = X \times A \times X \times X$ are Duplicator configurations, ranged over by tuples of the form $\langle x,a,x',y \rangle$. The moves are:
   \begin{itemize}
     \item Spoiler can move as follows:
	 \begin{enumerate}
	 	\item from $[x,y]$ to $\langle x,\alpha,x',y \rangle$ if there is a transition of the form  $x \xrightarrow{\alpha} x'$;
	 	\item from $[x,y]$ to $\langle y,\alpha,y',x \rangle$ if there is a transition of the form  $y \xrightarrow{\alpha} y'$;
		\item from $[x,x',y,y',y'']$ to $[x,y']$ or to $[x',y'']$;
	\end{enumerate}
     \item Duplicator can move from a configuration $\langle x,\alpha,x',y \rangle$ to $[x,x',y,y',y'']$ if there exist transitions of the form $y \Longrightarrow y' \xrightarrow{(\alpha)} y'$.
   \end{itemize}
   We model the Spoiler moves in Items 1 and 2 as a relation $\rightarrow_{S,1} \subseteq X^2 \times C_D$, and the Spoiler moves in Item 3 as a relation $\rightarrow_{S,2} \subseteq X^5 \times X^2$. Similarly, we write $c \rightarrow_D c'$ if there is a move from $c$ to $c'$ by Duplicator.
\end{definition}
   Plays and winning configurations are defined as in the strong bisimulation game.

In the branching bisimilarity game, Duplicator can answer with a sequence of $\tau$-steps followed by an actual $\alpha$-transition (or no transition at all, if $\alpha=\tau$). The key idea is to return the relevant information of this answer to Spoiler: the state just before and just after the $\alpha$-transition. Spoiler can then choose which one to proceed with.

\begin{remark}
	As we assume that $X$ is finite, the set of Duplicator moves is also finite. Note that, contrary to the previous section, it does not suffice to assume image-finiteness of the original LTS to ensure this, since Duplicator can use $\Longrightarrow$.
\end{remark}

\begin{remark}\label{rem:one-step-games}
A more recent game characterisation of branching bisimilarity is proposed by de Frutos-Escrig et al.~\cite{Frutos-EscrigKW17}. That game has the advantage of being \emph{local}: moves are defined from single steps in the transition system,
as opposed to the above game, where Duplicator can respond with the transitive closure. To deal with divergence, the winning condition then includes a non-trivial liveness property. A correspondence between such games and apartness is left for future work; this could perhaps be based on an adapted ``one-step'' apartness rule.
\end{remark}


Just like in the case of strong bisimulation, we provide a fixed-point characterisation of the winning regions of Duplicator and Spoiler. Notice the variety of box and diamond modalities, one for each type of Spoiler moves $\rightarrow_{S,i}$ (for $i\in\{1,2\}$). We focus on Spoiler only.

\begin{proposition}\label{prop:BBisim-WinSet}
	In the branching bisimulation game, the winning region $\mathcal{W}_S$ of Spoiler is the least set $W$ such that
        \[ \lozenge_{S,1} \Box_D \lozenge_{S,2} W \subseteq W \,.\]
\end{proposition}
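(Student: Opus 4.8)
The plan is to mirror the structure of the Spoiler case in the proof of \cref{prop:WinRegions}, adapting it to the non-alternating pattern of the branching bisimulation game. Concretely, I would first show that $\mathcal{W}_S$ satisfies the inclusion $\lozenge_{S,1} \Box_D \lozenge_{S,2} \mathcal{W}_S \subseteq \mathcal{W}_S$, and then show $\mathcal{W}_S$ is the \emph{least} such set, i.e.\ $\mathcal{W}_S \subseteq W$ for any $W$ with $\lozenge_{S,1} \Box_D \lozenge_{S,2} W \subseteq W$.

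For the first inclusion, suppose $c_0 \in \lozenge_{S,1} \Box_D \lozenge_{S,2} \mathcal{W}_S$. Then Spoiler has a move $c_0 \rightarrow_{S,1} c'$ (with $c' \in C_D$) such that every Duplicator response $c' \rightarrow_D c''$ (with $c'' \in X^5$) lands in $\lozenge_{S,2}\mathcal{W}_S$; so from each such $c''$ Spoiler has a further move $c'' \rightarrow_{S,2} c'''$ with $c''' \in \mathcal{W}_S$. Since $X$ is finite, there are finitely many such $c'''$; call them $c_1, \dots, c_k$ with associated winning strategies $\pi_1, \dots, \pi_k$. I would assemble a positional strategy $\pi$ for Spoiler exactly as in \cref{prop:WinRegions}: on the reachable set $\pi_i[c_i]$ (taking the least index $i$ when these overlap) play $\pi_i$; at $c_0$ play the move to $c'$; on each $c''$ resulting from a Duplicator move at $c'$, play the chosen $c'' \rightarrow_{S,2} c_{j}$; elsewhere play arbitrarily. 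The same footnote caveat about positional strategies and overlapping reachable sets applies. Any maximal play consistent with $\pi$ from $c_0$ takes the shape $c_0 \rightarrow_{S,1} c' \rightarrow_D c'' \rightarrow_{S,2} c_j \cdots$ and from $c_j$ onward is a play consistent with the winning strategy $\pi_j$, hence winning for Spoiler; so $c_0 \in \mathcal{W}_S$.

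For the second inclusion, I would again stratify $\mathcal{W}_S$ by the length of the longest maximal play consistent with a winning Spoiler strategy. Using that $X$ is finite — so the Duplicator move relation $\rightarrow_D$ is finitely branching (this is the point of the remark preceding the proposition; image-finiteness of the original LTS would not be enough because $\Longrightarrow$ enters) — the tree of plays consistent with a winning strategy $\pi$ is finitely branching, and by König's lemma it has finite depth (an infinite play would be winning for Duplicator). Thus $\mathcal{W}_S = \bigcup_n \mathcal{W}_S^n$ where $\mathcal{W}_S^n$ collects configurations admitting a winning strategy whose maximal plays have length at most $n$, counting Spoiler configurations in $X^2$. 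Induct on $n$: in the base case a maximal winning play is $c \rightarrow_{S,1} c'$ with Duplicator stuck, so vacuously $c \in \lozenge_{S,1}\Box_D \lozenge_{S,2} W$ and hence $c \in W$. In the step, given $c \in \mathcal{W}_S^n$ with witnessing $\pi$, every Duplicator move $\pi(c) \rightarrow_D c''$ leads, via the subsequent forced Spoiler move $c'' \rightarrow_{S,2} c'''$, to some $c''' \in \mathcal{W}_S^{n-1} \subseteq W$ by the induction hypothesis; hence $c'' \in \lozenge_{S,2} W$ for every such $c''$, so $\pi(c) \in \Box_D \lozenge_{S,2} W$, so $c \in \lozenge_{S,1} \Box_D \lozenge_{S,2} W \subseteq W$.

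The main obstacle — and the chief point of care rather than real difficulty — is bookkeeping around the two-move Spoiler phase: one must be careful that "length of a play" is counted so that the $X^2$-configurations decrease by exactly one per round (the intermediate $X^5$-configurations and $C_D$-configurations should not reset or inflate the count), and that the base case genuinely has Duplicator stuck immediately after a $\rightarrow_{S,1}$ move (Duplicator being stuck at an $X^5$ configuration is impossible since Spoiler, not Duplicator, moves from there). As with \cref{prop:WinRegions}, the use of König's lemma can be avoided by instead assuming finitely many configurations and bounding play length by that number — which here is automatic since $X$ is finite.
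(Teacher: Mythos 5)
Your proposal is correct and follows essentially the same route as the paper, which itself only sketches this proof as ``analogous to \cref{prop:WinRegions}, with the key difference being that Spoiler has the extra move after every Duplicator move'' and notes that finiteness of $X$ is what keeps Duplicator's move relation finitely branching. Your elaboration of both inclusions --- the two-phase strategy assembly with the positional-overlap caveat, and the stratification by longest-play length via K\"onig's lemma counting only $X^2$-configurations --- matches the intended argument.
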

The proof is analogous to that of \cref{prop:WinRegions}, with the key difference being that Spoiler has the ``extra'' move after every Duplicator move. When proving that $\lozenge_{S,1} \Box_D \lozenge_{S,2} \mathcal{W}_D \subseteq \mathcal{W}_D$, one extends the winning strategies with a first step as before; but this now includes two moves from Spoiler. In the proof that $\mathcal{W}_D \subseteq W$ whenever $ \lozenge_{S,1} \Box_D \lozenge_{S,2} W \subseteq W$, the key is that Duplicator has finitely many possible moves, since the original LTS is finite; this means that the notion of longest maximal play consistent with a Spoiler strategy is once again well-defined.

The above characterisation helps in establishing a correspondence between Spoiler winning configurations and apartness proofs.
\begin{theorem}
	For any $x,y \in X$, we have that $x$ and $y$ are branching apart iff $[x,y]$ is winning for Spoiler in the branching bisimulation game.
\end{theorem}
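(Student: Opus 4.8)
The plan is to mirror the proof of \cref{thm:apart-spoiler}, exploiting the fixed-point characterisation from \cref{prop:BBisim-WinSet}. The key observation is that the branching apartness rule of \cref{def:branching-apart} and the operator $W \mapsto \lozenge_{S,1}\Box_D\lozenge_{S,2}W$ are essentially the same map once we translate between relations on $X$ and sets of Spoiler configurations of the form $[x,y]$. Concretely, I would set up the translation $\Phi(R) = \{[x,y] \mid x \mathrel R y\} \cup (\text{some set of }X^5\text{-configurations})$ and its inverse, restricting attention to $C_S \cap X^2$; the $X^5$-configurations only occur transiently inside the $\Box_D\lozenge_{S,2}$ pattern and are collapsed back to $X^2$-configurations by the $\lozenge_{S,2}$ move.

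First I would unfold $\lozenge_{S,1}\Box_D\lozenge_{S,2}W$ explicitly at a configuration $[x,y]$. A Spoiler $\rightarrow_{S,1}$-move picks some $x \xrightarrow{\alpha} x'$ (or symmetrically some $y \xrightarrow{\alpha} y'$; the symmetry matches the fact that branching apartness relations are required to be symmetric) leading to $\langle x,\alpha,x',y\rangle$. The $\Box_D$ quantifies over all Duplicator answers $y \Longrightarrow y' \xrightarrow{(\alpha)} y''$, each landing in $[x,x',y,y',y'']$. Finally the $\lozenge_{S,2}$-move from $[x,x',y,y',y'']$ goes to either $[x,y']$ or $[x',y'']$, and Spoiler gets to choose. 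Putting these together, $[x,y] \in \lozenge_{S,1}\Box_D\lozenge_{S,2}W$ iff there is a transition $x \xrightarrow{\alpha} x'$ such that for every $y \Longrightarrow y' \xrightarrow{(\alpha)} y''$ we have $[x,y'] \in W$ or $[x',y''] \in W$ — which is exactly the premise of the branching apartness rule with $R = \{(u,v) \mid [u,v]\in W\}$, modulo symmetry.

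From this it follows that a symmetric relation $R$ is a branching apartness relation iff $\Phi(R)$ is (closed under the relevant operator, hence) a post-fixed point in the sense $\lozenge_{S,1}\Box_D\lozenge_{S,2}\Phi(R) \subseteq \Phi(R)$, at least on the $X^2$-part; and conversely any $W$ with $\lozenge_{S,1}\Box_D\lozenge_{S,2}W \subseteq W$ restricts to a branching apartness relation on $X^2$. Since $\bapart$ is the \emph{least} branching apartness relation and, by \cref{prop:BBisim-WinSet}, $\mathcal{W}_S$ is the \emph{least} $W$ closed under the operator, the two least fixed points coincide on configurations $[x,y]$, giving $x \bapart y \iff [x,y] \in \mathcal{W}_S$. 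I would take $\mathcal{W}_S \cap X^2$ (together with its induced behaviour on $X^5$-configurations) as the witness in one direction, and $\Phi(\bapart)$, suitably extended to $X^5$-configurations by declaring $[x,x',y,y',y''] \in W$ whenever $x \bapart y'$ or $x' \bapart y''$, as the witness in the other.

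The main obstacle I anticipate is the bookkeeping around the two-sorted configuration space: the operator in \cref{prop:BBisim-WinSet} acts on subsets of all of $C_S = X^2 \cup X^5$, whereas branching apartness lives on $X\times X$ only, so I must be careful that extending/restricting between the $X^2$- and $X^5$-layers does not change the least fixed point — i.e.\ that the ``canonical'' choice $[x,x',y,y',y''] \in W \iff (x\bapart y' \vee x'\bapart y'')$ is forced at the least fixed point. A secondary point, handled exactly as in \cref{prop:WinRegions} and noted after \cref{prop:BBisim-WinSet}, is that finiteness of $X$ is what makes the inductive ``longest maximal play'' argument go through, so that the least-fixed-point description of $\mathcal{W}_S$ is valid in the first place; I would simply invoke that here rather than redo it. The symmetry clause also needs a small remark: the apartness rule is applied to a symmetric relation, and the two symmetric families of $\rightarrow_{S,1}$-moves account for exactly this, so no genuine extra work is needed beyond observing the correspondence.
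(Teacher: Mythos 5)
Your proposal is correct and follows essentially the same route as the paper: both hinge on \cref{prop:BBisim-WinSet} together with the observation that unfolding $\lozenge_{S,1} \Box_D \lozenge_{S,2}$ at a configuration $[x,y]$ reproduces exactly the premise of the branching apartness rule of \cref{def:branching-apart} (the two kinds of $\rightarrow_{S,1}$-moves accounting for symmetry), the only cosmetic difference being that the paper proves the direction ``apart implies winning'' by explicit induction on the apartness proof tree, which is the unrolled form of your appeal to minimality of $\bapart$. One small caveat: an arbitrary prefixed point $W$ of the game operator need not restrict to a \emph{symmetric} relation on $X$, but this does not harm your argument, since you only need this for $\mathcal{W}_S$ itself, which is symmetric because it is a genuine fixed point of an operator whose output on the $X^2$-layer is always symmetric.
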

\begin{proof}
	We prove the implication from left to right by induction on the proof tree of $x \mathrel\bapart y$. Note that symmetry of the apartness relation corresponds to the Spoiler player having the choice which state to play from a pair.
Suppose the rule is applied to conclude $x \mathrel\bapart y$, with premise
	$x \xrightarrow{\alpha} x'$, so that for all $y \Longrightarrow y' \xrightarrow{\alpha} y''$ we have $x \mathrel\bapart y'$ or $x'\mathrel\bapart y''$.
	The induction hypothesis tells us that for each of these transitions, $[x,y']$ or $[x',y'']$ are winning for Spoiler (note that the base case is when there are no such transitions, Duplicator is stuck and Spoiler wins immediately). We have to show that $[x,y]$ is winning for Spoiler.

	Indeed, Spoiler can move to $\langle x, \alpha, x', y \rangle$, and Duplicator has to answer $[x,x',y,y',y'']$
	based on a transition
	$y \Longrightarrow y' \xrightarrow{\alpha} y''$. At this point Spoiler can move to $[x,y']$ or $[x',y'']$, one of which is a winning position.

	For the converse, it suffices to show that the apartness relation $\bapart$ satisfies $\lozenge_{S,1} \Box_D \lozenge_{S,2} \bapart \subseteq \bapart$. By \cref{prop:BBisim-WinSet}, we then get the desired implication. Indeed, suppose that
	$c \rightarrow_{S,1} c'$ and for any Duplicator move $c' \rightarrow_D c''$ there exists a Spoiler move $c'' \rightarrow_{S,2} c''' \in C_S$ such that
		$c''' \in \bapart$. Then by the definition of the moves, we can apply the apartness proof rule (possibly first with an application of symmetry) to obtain $c \in \bapart$. \qed
\end{proof}

\section{Future work}

The study of apartness as a dual to bisimilarity has been studied only recently~\cite{GeuversJacobs21-apartness} (although the notion of apartness for coalgebras is older, as explained in \emph{op. cit.} which cites unpublished work from Jacobs written in 1995). In the current work we have connected apartness to games by relating Spoiler strategies with apartness proofs.

One notable limitation of our approach is that we assumed that the underlying LTS is image-finite, and in the case of branching bisimilarity even finite-state. In fact, in the proof of the characterisation of winning regions as a reachability game, we made use of a proof by contradiction and an appeal to K\"onig's lemma just to achieve a usable notion of size that allows us to carry out induction. Notice that in the general case, even apartness proofs will not be finite anymore. We note that one way around the problem of being finite-state might be to adopt the ``one-step'' games of de Frutos-Escrig et al, see \cref{rem:one-step-games}.

We have only analysed strong and branching bisimilarity for LTSs. One direction for future work is to develop similar results for other forms of bisimulation relations like weak bisimulation and branching bisimulation with explicit divergence. The former can already be handled by our results from Section~\ref{sec:strongbisim} by working on the `saturated' transition relation $\twoheadrightarrow\subseteq X\times A^* \times X$ instead of single step transition relation $\rightarrow$. In particular, $x\stackrel{w}{\twoheadrightarrow}x'$ iff $x'$ is reachable from $x$ under observation $w\in A^*$ with $\tau$-steps interspersed between each observable step in $A$. As long as the state space is finite (the restriction required in the section on branching bisimilarity), Proposition~\ref{prop:WinRegions} remains applicable.

Another direction for future is to try and extend these ideas to a more general coalgebraic framework. Thus, it would be interesting to relate apartness to existing work on coalgebraic games~\cite{FordMSB022,KomoridaKHKH19-codensitygames,DBLP:conf/cmcs/0001MS20}; in particular, the work \cite{DBLP:conf/cmcs/0001MS20,WissmannMS22} that explicitly connect Spoiler strategies to distinguishing formulas. For instance, in \cite{DBLP:conf/cmcs/0001MS20}, the authors give procedures to compute Spoiler strategies for a bisimulation game and construction of a distinguishing formula from a Spoiler strategy (both at the levels of coalgebras). Moreover, a Spoiler strategy is given by a pair of functions (instead of an apartness proof): one modelling the smallest index when two states are separated in the fixed-point computation of bisimilarity; while the other encodes the moves of Spoiler from a given pair of states in the bisimulation game.
We leave the general coalgebraic study of the connection between apartness, distinguishing formulas and games for future work.

\bibliographystyle{plain}

\bibliography{refs}

\begin{thebibliography}{10}

\bibitem{Frutos-EscrigKW17}
David de~Frutos{-}Escrig, Jeroen J.~A. Keiren, and Tim A.~C. Willemse.
\newblock Games for bisimulations and abstraction.
\newblock {\em Log. Methods Comput. Sci.}, 13(4), 2017.

\bibitem{FordMSB022}
Chase Ford, Stefan Milius, Lutz Schr{\"{o}}der, Harsh Beohar, and Barbara
  K{\"{o}}nig.
\newblock Graded monads and behavioural equivalence games.
\newblock In {\em {LICS}}, pages 61:1--61:13. {ACM}, 2022.

\bibitem{Geuvers2022}
Herman Geuvers.
\newblock Apartness and distinguishing formulas in {Hennessy-Milner} logic.
\newblock In {\em A Journey from Process Algebra via Timed Automata to Model
  Learning}, volume 13560 of {\em {LNCS}}, pages 266--282. Springer, 2022.

\bibitem{GeuversJacobs21-apartness}
Herman Geuvers and Bart Jacobs.
\newblock Relating apartness and bisimulation.
\newblock {\em Log. Methods Comput. Sci.}, 17(3), 2021.

\bibitem{hm:hm-logic}
Matthew Hennessy and Robin Milner.
\newblock Algebraic laws for nondeterminism and concurrency.
\newblock {\em Journal of the ACM}, 32:137--161, 1985.

\bibitem{KomoridaKHKH19-codensitygames}
Yuichi Komorida, Shin{-}ya Katsumata, Nick Hu, Bartek Klin, and Ichiro Hasuo.
\newblock Codensity games for bisimilarity.
\newblock In {\em {LICS}}, pages 1--13. {IEEE}, 2019.

\bibitem{DBLP:conf/cmcs/0001MS20}
Barbara K{\"{o}}nig, Christina Mika{-}Michalski, and Lutz Schr{\"{o}}der.
\newblock Explaining non-bisimilarity in a coalgebraic approach: Games and
  distinguishing formulas.
\newblock In {\em {CMCS}}, volume 12094 of {\em {LNCS}}, pages 133--154.
  Springer, 2020.

\bibitem{Park81-bisim}
David Michael~Ritchie Park.
\newblock Concurrency and automata on infinite sequences.
\newblock In {\em Theoretical Computer Science}, volume 104 of {\em {LNCS}},
  pages 167--183. Springer, 1981.

\bibitem{Stirling92:TReport:bisimGames}
Colin Stirling.
\newblock Modal and temporal logics for processes.
\newblock LFCS ECS-LFCS-92-221, The University of Edinburgh, 1992.

\bibitem{Stirling99-BisimGames}
Colin Stirling.
\newblock Bisimulation, modal logic and model checking games.
\newblock {\em Log. J. {IGPL}}, 7(1):103--124, 1999.

\bibitem{TurkenburgBKR23}
Ruben Turkenburg, Harsh Beohar, Clemens Kupke, and Jurriaan Rot.
\newblock Forward and backward steps in a fibration.
\newblock In {\em {CALCO}}, volume 270 of {\em LIPIcs}, pages 6:1--6:18.
  Schloss Dagstuhl - Leibniz-Zentrum f{\"{u}}r Informatik, 2023.

\bibitem{GlabbeekSpectrumII}
Rob~J. van Glabbeek.
\newblock The linear time --- branching time spectrum ii.
\newblock In Eike Best, editor, {\em CONCUR'93}, pages 66--81, Berlin,
  Heidelberg, 1993. Springer Berlin Heidelberg.

\bibitem{BBisim:96}
Rob~J. van Glabbeek and W.~Peter Weijland.
\newblock Branching time and abstraction in bisimulation semantics.
\newblock {\em J. ACM}, 43(3):555–600, May 1996.

\bibitem{WissmannMS22}
Thorsten Wi{\ss}mann, Stefan Milius, and Lutz Schr{\"{o}}der.
\newblock Quasilinear-time computation of generic modal witnesses for
  behavioural inequivalence.
\newblock {\em Log. Methods Comput. Sci.}, 18(4), 2022.

\bibitem{YinFHHT14}
Qiang Yin, Yuxi Fu, Chaodong He, Mingzhang Huang, and Xiuting Tao.
\newblock Branching bisimilarity checking for {PRS}.
\newblock In {\em {ICALP} {(2)}}, volume 8573 of {\em {LNCS}}, pages 363--374.
  Springer, 2014.

\end{thebibliography}

\end{document}